\newcommand{\removelatexerror}{\let\@latex@error\@gobble}
\newcommand{\beq}{\begin{equation}}
\newcommand{\eeq}{\end{equation}}
\newcommand{\argmaxF}{\mathop{\mathrm{argmax}}\limits}   
\newcounter{algorithmctr}[section]
\renewcommand{\thealgorithmctr}{\thesection.\arabic{algorithmctr}}
   {\refstepcounter{algorithmctr}\begin{list}{}{%
       \setlength{\rightmargin}{0\linewidth}%
       \setlength{\leftmargin}{.05\linewidth}
        \setlength{\itemsep}{1pt}
  \setlength{\parskip}{0pt}
  \setlength{\parsep}{0pt}}%
       \rmfamily\small
       \item[]{\setlength{\parskip}{0ex}\hrulefill\par%
        \nopagebreak{\bfseries\textsf{Algorithm \thealgorithmctr~}}}}%
   {{\setlength{\parskip}{-1ex}\nopagebreak\par\hrulefill} \end{list}}
\newtheorem{assumption}{Assumption}
\newtheorem{theorem}{Theorem}
\newtheorem{proposition}{Proposition}
\newtheorem{remark}{Remark}
\newcommand{\SUM}{ \sum_{i=0}^j \sum_{k=0}^K}
\title{\LARGE \bf Sample-Based Learning Model Predictive Control \\ for Linear Uncertain Systems }
\author{Ugo Rosolia and Francesco Borrelli
\thanks{U.\ Rosolia and F.\ Borrelli are with the Department of Mechanical Engineering, University of California at Berkeley ,
        Berkeley, CA 94701, USA
        {\tt\small\{ugo.rosolia,\ fborrelli\}{@}berkeley.edu}}%
}
\begin{document}

\maketitle
\thispagestyle{empty}
\pagestyle{empty}

\begin{abstract}
We present a sample-based Learning Model Predictive Controller (LMPC) for constrained uncertain linear systems subject to bounded additive disturbances. 
The proposed controller builds on earlier work on LMPC for deterministic systems. 
First, we introduce the design of the safe set and value function used to guarantee safety and performance improvement. 
Afterwards, we show how these quantities can be approximated using noisy historical data. 
The effectiveness of the proposed approach is demonstrated through a numerical example. 
We show that the LMPC is able to safely explore the state space and to iteratively improve the worst-case closed-loop performance, while robustly satisfying state and input constraints.
\end{abstract}

\section{Introduction}
Exploiting historical data in order to iteratively improve the performance of Model Predictive Controllers (MPC) has been an active theme of research in the past few decades \cite{aswani2013provably,kocijan2004gaussian,koller2018learning,hewing2018cautious,chua2018deep,terzi2018learning,rosolia2019learning,lee1997model,ostafew2014learning,lee2000convergence,lee2000model}. The key idea is to use stored state-input pairs in order to compute at least one of the following three components used in the control design: $\emph{i})$ a \textit{model} which describes the evolution of the system, $\emph{ii})$ a \textit{safe set} of states (and an associated control policy $\pi(\cdot)$) from which the control task can be safely completed and $\emph{iii})$ a \textit{value function} which represents the cumulative closed-loop cost from a given point of the safe set when the policy $\pi(\cdot)$ is used. In this work, we present a strategy to build safe sets and the associated value functions by exploiting historical noisy closed-loop trajectories.


Policy evaluation strategies used to estimate value functions from historical data are studied in Approximate Dynamic Programming (ADP) and Reinforcement Learning (RL)~\cite{bertsekas1996neuro,bertsekas2019feature,recht2018tour}. For instance, direct strategies compute the estimate value function which best fits the closed-loop cost data over the stored states. On the other hand, in indirect strategies the estimate value function is computed by iteratively minimizing the temporal difference~\cite{sutton1988learning,bradtke1996linear}. A survey on policy evaluation strategies goes beyond the scope of this work, we refer the reader to~\cite{bertsekas2019feature,bertsekas1996neuro} for a comprehensive review on this topic.

The integration of MPC with system identification strategies has been extensively studied in the literature \cite{aswani2013provably,chua2018deep,koller2018learning,hewing2018cautious,kocijan2004gaussian,rosolia2019learning, terzi2018learning}. 
In \cite{chua2018deep} the authors identified the system's model using a deep neural network, which incorporates uncertainty using an ensemble of models. Another system identification strategy consists of fitting a Gaussian Process (GP) to experimental data \cite{koller2018learning,hewing2018cautious,kocijan2004gaussian}. GP provides a nominal model and confidence bounds, which may be used to tighten the constraint set over the planning horizon. This strategy allows to provide high-probability safety guarantees \cite{koller2018learning, hewing2018cautious}. The effectiveness of GP-based strategies on experimental platform has been shown in \cite{hewing2018cautious}, where a MPC is used to race a 1/43-scale vehicle. Regression strategies may also be used to identify the system's model \cite{rosolia2019learning, terzi2018learning}. For instance, the authors in \cite{terzi2018learning} used linear regression to identify both the nominal model and the model uncertainty used for robust MPC design. In \cite{rosolia2019learning}, we used local linear regression to identify the model used by the controller, which was able to drive a 1/10-scale race car at the limit of handling.

Data-based strategies to construct safe sets have been investigated in~\cite{wabersich2018linear,bacic2003general,brunner2013stabilizing,blanchini2003relatively,LMPClinear, LMPC}.
The authors in \cite{wabersich2018linear} proposed a linear model predictive safety certification framework, where safe sets are computed exploiting closed-loop data generated by a robust controller. In~\cite{bacic2003general,brunner2013stabilizing} the authors computed safe sets combining stored trajectories with polyhedron and ellipsoidal invariant sets. Another approach is proposed in~\cite{blanchini2003relatively} where the stored trajectories are mirrored to construct invariant sets.
In \cite{LMPClinear, LMPC} we showed that data from a deterministic system can be trivially used to compute safe sets. However, these strategies cannot be used to compute safe sets for uncertain system.


In this work we present a sample-based Learning Model Predictive Controller (LMPC) for linear systems subject to bounded additive uncertainty. 
We refer to a control task execution as ``iteration'' and we iteratively update the LMPC policy. 
At iteration $j-1$, we show how to construct a robust safe set and value function, which are used to synthesize the LMPC policy at next $j$th iteration. We show that the proposed strategy guarantees that: \emph{i}) state and input constraints are robustly satisfied, \emph{ii}) the closed-loop system converges asymptotically to a neighborhood of the origin, \emph{iii}) the worst-case performance of the $j$th LMPC policy is non-increasing with the iteration index, and \emph{iv)} the domain of the LMPC policy is not shrinking at each $j$th iteration. 
The proposed control strategy is computationally intensive. 
Therefore, we propose a practical algorithm that exploits simulations of the closed-loop system, which are associated with unknown sampled disturbance realizations.
These closed-loop simulations, referred to as ``roll-outs", are used to approximate the safe set and the value function used in the LMPC design. 


\section{Problem Definition}\label{sec:problemDef}
We consider the following linear time invariant system
\begin{equation}\label{eq:sys}
    x^j_{k+1} = A x^j_k + B u_k^j + w^j_k
\end{equation}
where at time $k$ of the $j$th iteration the disturbance $w_k^j \in \mathcal{W}$, the state $x_k \in \mathbb{R}^n$ and input $u_k^j \in \mathbb{R}^d$. Furthermore, the system is subject to the following convex polytopic state and input constraints, for all $k \geq 0$
\begin{equation*}\label{eq:stateInputConstr}
    x_k \in \mathcal{X} \text{ and } \pi^j(x^j_k) \in \mathcal{U}.
\end{equation*}

At each $j$th iteration, we define the worst-case iteration cost associated with the control policy $\pi^j(\cdot)$ as the solution to the Bellman equation
\begin{equation}\label{eq:BellmanRecursion}
    J^j_{\pi^j}(x_0^j) = \max_{w \in \mathcal{W}} [h(x_0^j, \pi^j(x_0^j)) + J^j_{\pi^j}(Ax_0^j + B\pi^j(x_0^j) + w)].
\end{equation}
The goal of the control design is to solve the following infinite time robust optimal control problem,
\begin{equation}\label{eq:controlProb}
\begin{aligned}
    J_{0 \rightarrow \infty}^{j,*}(x_S^j) = \min_{ \pi^j(\cdot)} & \quad J^j_{\pi^j}(x_0^j)\\
    & \quad x^j_{k+1} = A x^j_k + B  \pi^j(x^j_k) + w^j_k \\
    & \quad u_k^j =  \pi^j(x^j_k) \\
    & \quad x_{k}^j \in \mathcal{X},u_{k}^j \in \mathcal{U} \\
    & \quad x_0^j = x_S^j \\
    & \quad \forall w_k^j \in \mathcal{W}, k \in \{0, 1, \ldots\}.
\end{aligned}
\end{equation}

We present a strategy to iteratively design a feedback policy
\begin{equation}\label{eq:policy}
    \pi^j(\cdot) : \mathcal{F}^j\subseteq \mathcal{X} \rightarrow \mathcal{U}
\end{equation}
which is a feasible solution to Problem~\eqref{eq:controlProb} for $x_0^j \in \mathcal{F}^j$. In particular the proposed strategy guarantees: \emph{i}) \textit{convergence} of the closed-loop system \eqref{eq:sys} and \eqref{eq:policy} to a neighborhood of the origin $\mathcal{O}$, \emph{ii}) \textit{safety}, state and input constraints are robustly satisfied, \emph{iii}) \textit{performance improvement}, if the controller performs the same task repeatedly (i.e. $x_0^j=x_0^{j+1}$), then the worst-case iteration cost \eqref{eq:BellmanRecursion} is non-increasing (i.e. $J_{\pi^{j+1}}^{j+1}(x_0^{j+1}) \leq J_{\pi^{j}}^{j}(x_0^j)$), and \emph{iv)} \textit{exploration}, the domain of the policy \eqref{eq:policy} is not shrinking with the iteration index (i.e. $\mathcal{F}^i \subseteq \mathcal{F}^j, \forall j\geq i$). 

Throughout this paper we use the standard function classes $\mathcal{K}$, $\mathcal{K}_\infty$ and $\mathcal{KL}$ notation (see \cite{kellett2014compendium}) and we define the distance from a point $x\in\mathbb{R}^n$ to a set $\mathcal{O}\subseteq \mathbb{R}^n$ as
\begin{equation*}
    |x|_\mathcal{O} \overset{\Delta}{=} \inf_{d\in\mathcal{O}} ||x-d||_1.
\end{equation*}
Furthermore, we make the following assumptions.

\begin{assumption}\label{ass:O_inf}
The set $\mathcal{O} \subset \mathbb{R}^n$ is a robust positive invariant set for the autonomous system $x_{k+1} = (A + BK) x_k+w_k$, 
\begin{equation*}
    \forall x_k \in \mathcal{O} \rightarrow (A + BK) x_k + w_k \in \mathcal{O}, \forall w_k \in \mathcal{W}
\end{equation*}
and $\forall x_k \in \mathcal{O}$ we have that $Kx_k \in \mathcal{U}$.
\end{assumption}

\begin{assumption}\label{ass:cost}The continuous stage cost $h(\cdot,\cdot)$ is jointly convex in its arguments. Furthermore, we assume that $\forall x \in \mathbb{R}^n, \forall u \in \mathbb{R}^d$
\begin{equation*}
\begin{aligned}
\alpha_x^l(|x|_\mathcal{O}) \leq h(x,0) &\leq \alpha_x^u(|x|_\mathcal{O})\\
&\text{and } \alpha_u^l ( |u|_{K\mathcal{O}} ) \leq h(0,u) \leq \alpha_x^u(|u|_{K\mathcal{O}})
\end{aligned}
\end{equation*}
where $\alpha_{x}^u,\alpha_{x}^l, \alpha_{u}^u$ and $\alpha_{u}^l \in \mathcal{K}_\infty$.
\end{assumption}
Notice that the above assumptions imply that the optimal policy from~\eqref{eq:controlProb} robustly steers system~\eqref{eq:sys} to the goal set $\mathcal{O}$.

\section{Learning Model Predictive Control}
In this section we illustrate the control design strategy. We show how to construct a safe set of states, from which the control policy $\pi^j(\cdot)$ can successfully complete the control task. Afterward, we define a value function which approximates the cost-to-go associated with the control policy $\pi^j(\cdot)$. Finally, we exploit the safe set and the value function to synthesize the control policy $\pi^{j+1}(\cdot)$ at the next iteration $j+1$.

\subsection{Safe Set}
In this section we show how to iteratively construct a set of states from which the control task can be safely executed. First, we recall the definition of robust reachable set \cite{borrelli2017predictive} for the closed-loop system \eqref{eq:sys} and \eqref{eq:policy},
\begin{equation}\label{eq:reachSet}
\begin{aligned}
    \mathcal{R}_{k+1}&(x_0^j) = \Bigg\{x_{k+1} \in \mathcal{X} \Bigg| \begin{matrix} \exists w_k \in \mathcal{W}, x_k \in \mathcal{R}_{k}(x_0^j),\\
    {x_{k+1} = A x_k + B \pi^j(x_k) + w_k} \end{matrix} \Bigg\}
\end{aligned}
\end{equation}
with $\mathcal{R}_{0}(x_0^j) = x_0^j$. The above robust reachable set $\mathcal{R}_{N}(x_0^j)$ collects that states which may be reached in $N$-steps by the closed-loop system \eqref{eq:sys} and \eqref{eq:policy}.

Now, we define the safe set at the $j$th iteration as
\begin{equation}\label{eq:safeSet}
    \mathcal{SS}^j=\bigg\{ \bigcup_{k=0}^{\infty} \mathcal{R}_k(x_0^j) \bigg\} \bigcup \mathcal{O}.
\end{equation}
The above safe set $\mathcal{SS}^j$ contains the state evolution of the closed-loop system \eqref{eq:sys} and \eqref{eq:policy} at the $j$th iteration.
\begin{remark}
In practical applications each iteration has a finite-time duration. It is common in the literature to adopt an infinite time formulation at each iteration for the sake of simplicity. We follow such an approach in this paper. Our choice does not affect the practicality of the proposed method. In Section~\ref{sec:sampleBasedSafeSet}, we show that if the $j$th iteration is completed in finite time (i.e. $x_{T^j}^j \in \mathcal{O}, T^j < \infty$), then the safe set $\mathcal{SS}^j$ can be approximated using historical data.
\end{remark}

Finally, we define the convex safe set $\mathcal{CS}^j$ as the convex hull of the safe sets $\mathcal{SS}^k$ for iterations $k\in \{0,\ldots,j\}$, 
\begin{equation}\label{eq:CS}
    \mathcal{CS}^j = \text{conv}\Bigg( \bigcup_{k=0}^j \mathcal{SS}^k \Bigg).    
\end{equation}
Notice that, if the control policies $\pi^k(\cdot)$ for $k\in \{0, \ldots, j\}$ safely steer the system to the neighborhood of the origin $\mathcal{O}$. Then, $\mathcal{CS}^j$ is a robust control invariant set as stated by the following proposition.

\begin{proposition}\label{prop:SSinv}
For $j\geq0$, let $\pi^j(\cdot) : \mathcal{F}^j \rightarrow \mathcal{U}$ be a control policy defined over $\mathcal{F}^j \subseteq \mathcal{X}$. Consider system~\eqref{eq:sys} in closed-loop with $\pi^j(\cdot)$ and assume that $\forall x_0^j \in \mathcal{F}^j$ we have $x_k^j \in \mathcal{X}$ and $\lim_{t \rightarrow \infty }x_{t}^j \in \mathcal{O},\forall w_k\in \mathcal{W}, k\geq0$. Then, the convex safe set $\mathcal{CS}^j \subseteq \mathcal{X}$ is a robust control invariant set for system~\eqref{eq:sys},
\begin{equation*}
    \forall x \in \mathcal{CS}^j \rightarrow Ax+B\pi^j(x)+w \in  \mathcal{CS}^j, ~\forall w \in \mathcal{W}
\end{equation*}
\end{proposition}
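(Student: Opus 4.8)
The plan is to prove robust control invariance in two stages: first establish that each individual safe set $\mathcal{SS}^k$ is robustly control invariant under its own generating policy, and then show that the convex hull of robustly control invariant sets inherits the same property for the linear dynamics \eqref{eq:sys}. Throughout I will exploit that $\mathcal{X}$, $\mathcal{U}$ and hence $\mathcal{CS}^j$ are convex.

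First I would argue that each $\mathcal{SS}^k$, for $k \in \{0,\dots,j\}$, is robustly control invariant. Take any $x \in \mathcal{SS}^k$. If $x$ lies in some reachable set $\mathcal{R}_i(x_0^k)$, then by the very definition \eqref{eq:reachSet} the successor $Ax + B\pi^k(x) + w$ belongs to $\mathcal{R}_{i+1}(x_0^k) \subseteq \mathcal{SS}^k$ for every $w \in \mathcal{W}$, so the policy $\pi^k$ keeps the state inside the safe set. If instead $x \in \mathcal{O}$, then Assumption~\ref{ass:O_inf} guarantees that the local controller $K$ satisfies $(A+BK)x + w \in \mathcal{O} \subseteq \mathcal{SS}^k$ for all $w \in \mathcal{W}$, with $Kx \in \mathcal{U}$. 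Hence on each $\mathcal{SS}^k$ there is an admissible control (namely $\pi^k$ on the reachable part and $K$ on $\mathcal{O}$) rendering the set robustly invariant; the convergence hypothesis $\lim_{t\to\infty} x_t^k \in \mathcal{O}$ together with $x_t^k \in \mathcal{X}$ is what makes this construction well defined and ensures $\mathcal{SS}^k \subseteq \mathcal{X}$.

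The main step, which I expect to carry the real content, is passing from the individual sets to their convex hull. Pick any $x \in \mathcal{CS}^j$. By definition of the convex hull I can write $x = \sum_{i} \lambda_i x_i$ with $\lambda_i \geq 0$, $\sum_i \lambda_i = 1$, and each $x_i \in \mathcal{SS}^{k_i}$ for some $k_i \in \{0,\dots,j\}$. By the first step each $x_i$ admits an admissible input $u_i \in \mathcal{U}$ with $Ax_i + Bu_i + w \in \mathcal{SS}^{k_i} \subseteq \mathcal{CS}^j$ for all $w \in \mathcal{W}$. I then propose the candidate input $u = \sum_i \lambda_i u_i$, which lies in $\mathcal{U}$ by convexity. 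The key algebraic observation is that, because $\sum_i \lambda_i = 1$, the disturbance can be distributed across the terms, giving
\begin{equation*}
    Ax + Bu + w = \sum_i \lambda_i \big(Ax_i + Bu_i\big) + \Big(\sum_i \lambda_i\Big) w = \sum_i \lambda_i \big(Ax_i + Bu_i + w\big).
\end{equation*}
Each summand $Ax_i + Bu_i + w$ lies in $\mathcal{CS}^j$, and since $\mathcal{CS}^j$ is convex the convex combination remains in $\mathcal{CS}^j$ for every $w \in \mathcal{W}$, which establishes the invariance; the witnessing control on $\mathcal{CS}^j$ is the convex combination $u = \sum_i \lambda_i u_i$.

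The subtle point I would flag is precisely this disturbance-distribution identity: robust invariance of a convex hull does not follow from set algebra alone, but relies on using the \emph{same} realization $w$ in every component and exploiting $\sum_i \lambda_i = 1$ so that the additive term is reproduced exactly. I would also note the minor notational caveat that the statement writes the successor with $\pi^j(x)$, whereas the invariance is witnessed by the constructed convex-combination input; since robust control invariance only requires existence of an admissible input keeping the state in the set, this suffices, with $\pi^j$ understood as a policy realizing that input. Finally, $\mathcal{CS}^j \subseteq \mathcal{X}$ is immediate, since each $\mathcal{SS}^k \subseteq \mathcal{X}$ and $\mathcal{X}$ is convex.
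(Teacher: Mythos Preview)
Your argument is correct and follows the same two-stage route as the paper: first observe that each $\mathcal{SS}^k$ is robustly control invariant under its own policy (the paper simply cites definition~\eqref{eq:safeSet} and the standing assumption on $\pi^k$), then pass to the convex hull by linearity of~\eqref{eq:sys}. The paper compresses the second step to the single phrase ``by linearity of system~\eqref{eq:sys}'', whereas you spell out the convex-combination input and the disturbance-distribution identity $\sum_i\lambda_i(Ax_i+Bu_i+w)=Ax+Bu+w$; this is exactly the content hidden behind that phrase, and your remark that the witnessing input on $\mathcal{CS}^j$ is the convex combination rather than literally $\pi^j(x)$ is a fair clarification of the statement's notation.
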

\begin{proof}
By assumption $\pi^k(\cdot)$ for $k \in \{0,\ldots,j\} $ in closed-loop with~\eqref{eq:sys} robustly satisfies and input constraints. By definition~\eqref{eq:safeSet}, $\mathcal{SS}^k$ is a robust control invariant set for $k\in\{0,\ldots,j\}$. Therefore, by linearity of system \eqref{eq:sys}, $\mathcal{CS}^j \subseteq \mathcal{X}$ is a robust control invariant set.
\end{proof}

\subsection{Q-function}
In this section we define the value function $Q^j(\cdot) : \mathcal{CS}^j \rightarrow \mathbb{R}$, which approximates the cost-to-go from any state $x \in \mathcal{CS}^j$. Recall that the iteration cost~\eqref{eq:BellmanRecursion} for the control policy $\pi^j(\cdot)$ is given by the solution to following Bellman equation
\begin{equation}\label{eq:Bellman}
    J^j_{\pi^j}(x_0^j) = \max_{w \in \mathcal{W}} [h(x_0^j, \pi^j(x_0^j)) + J^j_{\pi^j}(Ax_0^j + B\pi^j(x_0^j) + w)],
\end{equation}
and it represents the worst-case cost-to-go from any point in the state space. The solution to the above Bellman equation is hard to compute~\cite{bertsekas1996neuro} and closed-form exists just for few problems~\cite{borrelli2017predictive}. For a survey on strategies to approximate the solution to Bellman equation we refer to~\cite{bertsekas1996neuro, bertsekas2019feature}.

Now, we define the worst-case cost-to-go over the safe set as 
\begin{equation}\label{eq:worstCostToGo}
    L_{\pi^j}^j(x) = \begin{cases} \max\limits_{w\in\mathcal{W}} [h(x , \pi^j(x)) +  L_{\pi^j}^j(x^j_+(w))] & \mbox{If } x\in \mathcal{SS}^j  \\ 
    +\infty & \mbox{If } x\notin \mathcal{SS}^j \end{cases}
\end{equation}
where $x^j_+(w) = Ax+B\pi^j(x)+w$. Notice that, for all $x \in \mathcal{SS}^j$, the above function coincides with the Bellman equation~\eqref{eq:Bellman}. 
The difference between $J_{\pi^j}^j(\cdot)$ and $L_{\pi^j}^j(\cdot)$ is that the domain of the latter is the safe set $\mathcal{SS}^j$ from~\eqref{eq:safeSet}. 
The solution equation~\eqref{eq:worstCostToGo} is still hard to compute, however it may be approximated using sampled closed-loop trajectories from~$\mathcal{SS}^j$, as shown in Section~\ref{sec:sampleQfun}. 

Finally, for all $x \in \mathcal{CS}^j$ we define the function
\begin{equation}\label{eq:Qfun}
    Q^j(x) = \min_\mu \{\mu ~ | ~ (x, \mu) \in  \text{conv}\big( \textstyle \bigcup_{k=0}^j \text{epi}(L_{\pi^j}(x)^j ) \big) \},
\end{equation}
which interpolates the worst-case cost-to-go functions $L_{\pi^k}^k(\cdot)$ for $k \in \{0, \ldots, j \}$. Notice that the above $Q^j(\cdot)$ is simply a convexification of the cost-to-go functions (i.e. $\text{epi}(Q^j(x)) = \text{conv}\big( \cup_{k=0}^j\text{epi}(L_{\pi^k}(x)^k ))$). Furthermore, if the control policies $\pi^k(\cdot)$ for $k\in \{0, \ldots, j\}$ safely steer the system to the neighborhood of the origin $\mathcal{O}$, then the approximated value function $Q^j(\cdot)$ is a robust control Lyapunov function over the convex safe set $\mathcal{CS}^j$ for system \eqref{eq:sys}, as shown by the following proposition.

\begin{proposition}\label{prop:lyapFun}
For $j\geq0$, let $\pi^j(\cdot) : \mathcal{F}^j \rightarrow \mathcal{U}$ be a control policy defined over $\mathcal{F}^j \subseteq \mathcal{X}$. Consider system~\eqref{eq:sys} in closed-loop with $\pi^j(\cdot)$ and assume that $\forall x_0^j \in \mathcal{F}^j$ we have $x_k^j \in \mathcal{X}$ and $\lim_{t \rightarrow \infty }x_{t}^j \in \mathcal{O} ~\forall w_k\in \mathcal{W}$. Then, $Q^j(\cdot)$ is a robust control Lyapunov function, i.e.
\begin{equation}\label{eq:proofContrLyap}
   \min_{u\in\mathcal{U}}\max_{w \in \mathcal{W}} \big[Q^j(Ax+Bu+w) + h(x,u) -Q^j(x) \big]\leq 0
\end{equation}
for all $x \in \mathcal{CS}^j$.
\end{proposition}
\begin{proof}
From definition~\eqref{eq:Qfun}, we have that $\forall x \in \mathcal{CS}^j$ there exist a set of multipliers $\{\lambda_0^0,\ldots,\lambda_k^i,\ldots,\lambda_K^j\}$ and a set of states $\{x_0^0,\ldots,x_k^i,\ldots,x_K^j\}$ such that for all $ k \in \{0,\ldots,K\}$ and for all $i \in \{0,\ldots,j\}$ we have $x_k^i \in \mathcal{SS}^i$, $\lambda_k^i \geq 0$, $\SUM \lambda_k^i = 1$, $\SUM \lambda_k^i x_k^i = x$, and
\begin{equation*}
Q^j(x) = \SUM \lambda_k^i L^i_{\pi^i}(x_k^i)\\.
\end{equation*}
Substituting in the above equation the definition of the worst-case cost-to-go~\eqref{eq:worstCostToGo} evaluated at $x_k^i \in \mathcal{SS}^i$ and leveraging the convexity of $h(\cdot, \cdot)$, we have that
\begin{equation*}
\begin{aligned}
Q^j(x) 
&= \SUM \lambda_k^i \big[ \max\limits_{w\in\mathcal{W}} [h(x_k^i , \pi^i(x_k^i)) +  L_{\pi^i}^i(x^i_{k,+}(w))] \big]\\
&\geq \max\limits_{w\in\mathcal{W}} [h( x ,  u) +  \SUM \lambda_k^i L_{\pi^i}^i(x^i_{k,+}(w))],
\end{aligned}
\end{equation*}
where $ x = \SUM \lambda_k^i x_k^i$, $ u = \SUM \lambda_k^i \pi^i(x_k^i) \in \mathcal{U}$ and $x^i_{k,+}(w) = Ax_k^i + B\pi^i(x_k^i) + w$. Definition~\eqref{eq:Qfun} implies that $Q^j(x) \leq L^i_{\pi^i}(x), \forall x \in \mathcal{CS}^j$ and $\forall i \in \{0,\ldots,j\}$, therefore from the above equation and convexity of $Q^j(\cdot)$ we conclude that
\begin{equation*}
\begin{aligned}
    Q^j(x)
    &\geq \max\limits_{w\in\mathcal{W}} [h( x ,  u) +  \SUM \lambda_k^i Q^j(x^i_{k,+}(w))]\\
    &\geq \max\limits_{w\in\mathcal{W}} [h( x ,  u) +   Q^j\big(\SUM \lambda_k^i x^i_{k,+}(w)\big)]\\
    &\geq \min_{u \in \mathcal{U}}\max\limits_{w\in\mathcal{W}} [h( x ,  u) + Q^j(A  x + B  u + w)].
\end{aligned}
\end{equation*}
\end{proof}

\subsection{Controller Design}
In this section we illustrate the controller design which leverages the convex safe set~\eqref{eq:CS} and the approximated value function~\eqref{eq:Qfun}. At each time $t$ of the $j$th iteration, we solve the following finite time optimal control problem
\begin{equation}\label{eq:FTOCP}
\begin{aligned}
    J^{\scalebox{0.5}{LMPC},j}_{t \rightarrow t+N }(x_t^j) = \min_{{\boldsymbol{\pi}_t^j(\cdot)}}   \max_{\bar {\bf{w}}_t^j} & ~ [\sum_{k=t}^{t+N-1} h(x_{k|t}^j, u_{k|t}^j) \\
    &\quad\quad\quad\quad\quad\quad+ Q^{j-1}(x_{t+N|t}^j)]\\
    & ~ x_{k+1|t}^j = A x_{k|t}^j + B u_{k|t}^j + \bar w_{k|t}^j \\
    & ~ u_{k|t}^j = \pi_{k|t}^j(x_{k|t}^j) \\
    & ~ x_{k|t}^j \in \mathcal{X}, u_{k|t}^j \in \mathcal{U} \\
    &~x_{t+N|t}^j \in \mathcal{CS}^{j-1} \\
    & ~ x_{t|t}^j = x_t^j \\
    & ~ \forall \bar w_{k|t}^j \in \mathcal{W}, k \in \{t, \ldots, t+N \}
\end{aligned}
\end{equation}
where the control policy ${\boldsymbol{\pi}_t^j}(\cdot)=[\pi_{t|t}^j(\cdot), \ldots, \pi_{t+N|t}^j(\cdot)]$ and the disturbance $\bar {\bf{w}}^j_t= [\bar w_{t|t}^j, \ldots, \bar w_{t+N|t}^j]$. The optimal feedback policy from the above finite time optimal control problem safely steers system~\eqref{eq:sys} from $x_t^j$ to the convex safe set, while minimizing the worst-case cost. Let 
\begin{equation}\label{eq:optPolicy}
    {\boldsymbol{\pi}}_t^{j,*}(\cdot)=[\pi_{t|t}^{j,*}(\cdot), \ldots, \pi_{t+N|t}^{j,*}(\cdot)]
\end{equation}
be the optimal feedback policy to Problem~\eqref{eq:FTOCP}. Then we apply to system \eqref{eq:sys}
\begin{equation}\label{eq:MPCpolicy}
    \pi^j(x_t^j) = \pi_{t|t}^{j,*}(x_t^j).
\end{equation}
The finite time optimal control problem \eqref{eq:FTOCP} is solved at time $t + 1$, based on the new state $x_{t+1|t+1}^j = x^j_{t+1}$, yielding a moving or receding horizon control strategy.

Furthermore, we define the domain of the LMPC policy~\eqref{eq:MPCpolicy}, which is given by
\begin{equation}\label{eq:F}
    \mathcal{F}^j = \Bigg\{x_0 \in \mathcal{X} \Bigg| \begin{matrix}\exists \kappa(\cdot):\mathbb{R}^n\rightarrow \mathbb{R}^d,x_k \in \mathcal{X}, \kappa(x_k) \in \mathcal{U} \\
    x_{k+1} = A x_k + B \kappa(x_k) + w_k, \\ 
    x_{N} \in \mathcal{CS}^{j-1}, \forall w_k \in \mathcal{W}, k\in\{0,\ldots,N\} \end{matrix} \Bigg\}.
\end{equation}
The set $\mathcal{F}^j$, which collects the feasible initial conditions to Problem~\eqref{eq:FTOCP}, is used to compute the initial state $x_0^j$ of the $j$th iteration. In particular, the initial condition at the $j$th iteration is computed solving the following convex optimization problem,
\begin{equation}\label{eq:initCondition}
    x_0^j = \argmaxF_{x \in \mathcal{F}^j} \{a x ~ | ~ a^\perp x = 0\}
\end{equation}
where the user-defined row vector $a \in \mathbb{R}^n$ represents the direction in which the LMPC explores the state space, and $a^\perp \in \mathbb{R}^n$ is a row vector perpendicular to $a$.

It is well-known that the solution to Problem~\eqref{eq:FTOCP} can be computed enumerating the vertices of the disturbance over the prediction horizon \cite{scokaert1998min}. Therefore, the computational complexity of  Problem~\eqref{eq:FTOCP} explodes with the horizon length $N$. For this reason, it is important to construct a terminal set and terminal cost, which allow to guarantee safety and performance improvement independently on the prediction horizon length. In the result section, we show that the proposed controller is able to safely explore the state space and to improve its performance, even with a short prediction horizon.

\subsection{Properties}
As discussed in Propositions~\ref{prop:SSinv}-\ref{prop:lyapFun}, for every point in $\mathcal{CS}^j$ there exists a control policy which safely steers the system to the terminal goal set. The properties of $\mathcal{CS}^j$ and $Q^j(\cdot)$ allow us to guarantee that the proposed strategy meets the requirements from Section~\ref{sec:problemDef}. 
The following theorem shows that the LMPC~\eqref{eq:FTOCP} and \eqref{eq:MPCpolicy} satisfies state and input constraints while steering the system to the neighborhood of the origin $\mathcal{O}$.

\begin{theorem}\label{th:recFeas}
Consider system~\eqref{eq:sys} in closed-loop with the LMPC~\eqref{eq:FTOCP} and \eqref{eq:MPCpolicy}. Let Assumptions~\ref{ass:O_inf}-\ref{ass:cost} hold, initialize $\mathcal{CS}^0=\mathcal{O}$ and $Q^0(\cdot)=0$. If $x_0^j\in\mathcal{F}^j, \forall j\geq1$, then the LMPC~\eqref{eq:FTOCP} and \eqref{eq:MPCpolicy} is feasible for all $t\geq0$ and iteration $j\geq1$. Furthermore, the closed-loop system asymptotically converges to $\mathcal{O}$, regardless of the disturbance realization.
\end{theorem}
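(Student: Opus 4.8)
The plan is to prove the two claims separately: first recursive feasibility by induction, and then asymptotic convergence via a Lyapunov argument that uses the optimal value of Problem~\eqref{eq:FTOCP} as the Lyapunov function. Throughout I fix an iteration $j\geq1$, so that the terminal ingredients are $\mathcal{CS}^{j-1}$ and $Q^{j-1}$; the base iteration $j=1$ is covered because $\mathcal{CS}^0=\mathcal{O}$ is robust positive invariant under $Kx$ by Assumption~\ref{ass:O_inf}, so Propositions~\ref{prop:SSinv}--\ref{prop:lyapFun} apply to the terminal data at every iteration.

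For recursive feasibility I would induct on the time index $t$. The base case $t=0$ is immediate: the hypothesis $x_0^j\in\mathcal{F}^j$ together with the definition~\eqref{eq:F} of $\mathcal{F}^j$ exhibits a feasible policy for Problem~\eqref{eq:FTOCP}. For the inductive step, assume Problem~\eqref{eq:FTOCP} is feasible at time $t$ with optimal policy~\eqref{eq:optPolicy}, let $w_t^j\in\mathcal{W}$ be the realized disturbance, and set $x_{t+1}^j=Ax_t^j+B\pi_{t|t}^{j,*}(x_t^j)+w_t^j\in\mathcal{X}$. I would then build the shifted candidate policy $[\pi_{t+1|t}^{j,*}(\cdot),\ldots,\pi_{t+N|t}^{j,*}(\cdot),\pi^{j-1}(\cdot)]$: its first $N-1$ components inherit state/input-constraint satisfaction and the inclusion $x_{t+N|t+1}^j\in\mathcal{CS}^{j-1}$ from optimality at time $t$, and the appended terminal policy $\pi^{j-1}(\cdot)\in\mathcal{U}$ keeps the final predicted state inside $\mathcal{CS}^{j-1}$ for every $w\in\mathcal{W}$ by the robust control invariance established in Proposition~\ref{prop:SSinv}. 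This makes the candidate feasible at $t+1$ for every disturbance realization, closing the induction and giving feasibility for all $t\geq0$ and all $j\geq1$.

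For convergence I would let $V_t^j$ denote the optimal value of Problem~\eqref{eq:FTOCP} at $x_t^j$ and derive a descent inequality. Evaluating the worst-case cost of the same shifted candidate and invoking the control-Lyapunov inequality of Proposition~\ref{prop:lyapFun} on $\mathcal{CS}^{j-1}$, the appended stage cost $h(x_{t+N|t}^j,\pi^{j-1}(x_{t+N|t}^j))$ together with the new terminal term $Q^{j-1}(\cdot)$ is absorbed back into $Q^{j-1}(x_{t+N|t}^j)$, so the candidate cost is bounded by the worst-case cost of the optimal tail started from $x_{t+1}^j$. Fixing the realized $w_t^j$ in the inner maximization defining $V_t^j$ can only lower that maximum, which yields $V_{t+1}^j\leq V_t^j-h(x_t^j,u_t^j)$ for every realized disturbance sequence. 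Since $h\geq0$ and $Q^{j-1}\geq0$ give $V_t^j\geq0$, summing the descent inequality produces $\sum_{t=0}^{\infty}h(x_t^j,u_t^j)\leq V_0^j<\infty$, hence $h(x_t^j,u_t^j)\to0$; by the lower bound of Assumption~\ref{ass:cost}, $h(x_t^j,u_t^j)$ dominates a $\mathcal{K}_\infty$ function of $|x_t^j|_{\mathcal{O}}$, so $|x_t^j|_{\mathcal{O}}\to0$, i.e. the closed loop converges asymptotically to $\mathcal{O}$.

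The main obstacle is the min--max structure of Problem~\eqref{eq:FTOCP}: the delicate point is that both the feasibility candidate and the descent inequality must hold for the \emph{actual} realized disturbance rather than merely the worst case. This is exactly where the feedback (policy-valued) formulation matters — the shifted candidate is a policy, so it reacts to the realized $w_t^j$ and to all future disturbances, and the step ``fixing one realized disturbance can only decrease the inner maximum'' is what converts the worst-case optimality at time $t$ into a per-realization decrease at time $t+1$. Verifying these two points carefully, while confirming that the terminal inclusion $x_{t+N|t}^j\in\mathcal{CS}^{j-1}$ legitimately licenses the use of Proposition~\ref{prop:lyapFun}, is the crux of the argument; the remaining steps are routine.
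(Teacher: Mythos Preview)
The paper does not actually prove Theorem~\ref{th:recFeas} in the body --- it defers the proof to the extended arXiv version --- so no line-by-line comparison is possible. Your argument is the standard robust-MPC proof (shifted candidate policy for recursive feasibility, optimal value of~\eqref{eq:FTOCP} as a Lyapunov function for convergence), and the overall structure is correct; in particular your handling of the min--max, where the realized disturbance is one admissible choice in the inner maximization and hence yields the per-realization descent $V_{t+1}^j\le V_t^j-h(x_t^j,u_t^j)$, is exactly the right mechanism.

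One point deserves more care than you give it. The ``terminal policy $\pi^{j-1}(\cdot)$'' you append is not literally the LMPC policy from iteration $j-1$: a generic terminal state $x_{t+N|t}^j\in\mathcal{CS}^{j-1}$ need not lie in $\mathcal{SS}^{j-1}$ (or in $\mathcal{F}^{j-1}$), so $\pi^{j-1}$ may not even be defined there. What Propositions~\ref{prop:SSinv}--\ref{prop:lyapFun} actually supply at such a point is a \emph{state-dependent convex combination} $u=\sum_{k=0}^{j-1}\lambda^k\pi^k(x^k)$ of past policies evaluated at points $x^k\in\mathcal{SS}^k$ with $\sum_k\lambda^kx^k=x_{t+N|t}^j$; this combination simultaneously gives invariance of $\mathcal{CS}^{j-1}$ and the decrease of $Q^{j-1}$ (see the proof of Proposition~\ref{prop:lyapFun}). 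The paper's own statement of Proposition~\ref{prop:SSinv} is equally loose about this, so your usage is consistent with the paper, but in a written-out proof you should make explicit that the appended terminal action comes from this convex-combination construction (reducing to $u=Kx$ on $\mathcal{CS}^0=\mathcal{O}$ for $j=1$) rather than from $\pi^{j-1}$ alone; otherwise the feasibility and Lyapunov-decrease steps at the terminal stage are not rigorously justified.
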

\begin{proof}
Assume that at the $j$th iteration $Q^j(\cdot)$ is a robust control Lyapunov function defined on the robust control invariant set $\mathcal{CS}^j$. Then, by standard MPC arguments and the assumption on $x_0^j \in \mathcal{F}^j$, we have that at iteration $j+1$ the LMPC~\eqref{eq:FTOCP} and \eqref{eq:MPCpolicy} recursively satisfies state and input constraints, and the closed-loop system~\eqref{eq:sys} and \eqref{eq:MPCpolicy} converges asymptotically to the terminal set $\mathcal{O}$ \cite{borrelli2017predictive}. Consequently, the LMPC policy at iteration $j+1$ used to compute $Q^{j+1}(\cdot)$ and $\mathcal{CS}^{j+1}$ satisfies the assumptions in Propositions~\ref{prop:SSinv}-\ref{prop:lyapFun}, and therefore $Q^{j+1}(\cdot)$ is a robust control Lyapunov function defined on the robust control invariant set $\mathcal{CS}^{j+1}$. \\
The proof is completed by induction. We initialized $Q^0(\cdot)=0$, which is a robust control Lyapunov function defined on the robust control invariant set $\mathcal{CS}^0=\mathcal{O}$. Therefore it follows that $\forall j \geq 1$ the LMPC~\eqref{eq:FTOCP} and \eqref{eq:MPCpolicy} recursively satisfies state and input constraints, and the closed-loop system~\eqref{eq:sys} and \eqref{eq:MPCpolicy} converges asymotically to the terminal set $\mathcal{O}$.
\end{proof}

Next, we discuss the performance improvement properties. In particular, we show that if the initial condition of two subsequent iterations does not change (i.e. $x_0^j = x_0^{j+1}$), then the worst-case cost iteration cost is non-increasing.

\begin{theorem}\label{th:cost}
Consider system~\eqref{eq:sys} in closed-loop with the LMPC~\eqref{eq:FTOCP} and \eqref{eq:MPCpolicy}. Let Assumptions~\ref{ass:O_inf}-\ref{ass:cost} hold, initialize $\mathcal{CS}^0=\mathcal{O}$ and $Q^0(\cdot)=0$. If the initial condition of two subsequent iterations are equal, $x_0^{j+1} = x_0^j \in\mathcal{F}^j$. Then, the worst-case iteration cost~\eqref{eq:BellmanRecursion} is non-increasing with the iteration index $J_{0 \rightarrow T^{j+1}}^{j+1}(x_0^{j+1}) \leq J_{0 \rightarrow T^j}^{j}(x_0^j).$
\end{theorem}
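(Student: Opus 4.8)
The plan is to chain two inequalities. First I would bound the worst-case closed-loop cost of iteration $j+1$ from above by the optimal LMPC cost-to-go evaluated at the shared initial state, $J^{\mathrm{LMPC},j+1}_{0\rightarrow N}(x_0^{j+1})$; then I would bound this LMPC cost from above by the worst-case iteration cost of iteration $j$, exploiting that the data of iteration $j$ has already been folded into the terminal ingredients $\mathcal{CS}^j$ and $Q^j(\cdot)$ used by the LMPC at iteration $j+1$. Since $x_0^{j+1}=x_0^j$, composing the two bounds yields the claim. Throughout I use that, on $\mathcal{SS}^j$, the worst-case cost-to-go $L^j_{\pi^j}(\cdot)$ agrees with the Bellman solution~\eqref{eq:Bellman}, so that $L^j_{\pi^j}(x_0^j)=J^j_{0\rightarrow T^j}(x_0^j)$, and that the convexification~\eqref{eq:Qfun} gives the pointwise bound $Q^j(x)\le L^j_{\pi^j}(x)$.

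For the first inequality I would establish the standard robust value-function decrease. Let $\boldsymbol{\pi}_t^{j+1,*}(\cdot)$ be optimal for~\eqref{eq:FTOCP} at time $t$ of iteration $j+1$. After applying $\pi^{j+1}(x_t^{j+1})=\pi^{j+1,*}_{t|t}(x_t^{j+1})$ and observing any $w_t\in\mathcal{W}$, I build a feasible candidate at $t+1$ by time-shifting the tail $\pi^{j+1,*}_{t+1|t}(\cdot),\dots,\pi^{j+1,*}_{t+N|t}(\cdot)$ and appending one terminal control supplied by Proposition~\ref{prop:lyapFun}: since the predicted terminal state lies in $\mathcal{CS}^j$ and $Q^j(\cdot)$ is a robust control Lyapunov function there, Proposition~\ref{prop:SSinv} keeps the successor inside $\mathcal{CS}^j$ while the Lyapunov inequality absorbs the new stage cost. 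This candidate certifies
\begin{equation*}
\max_{w\in\mathcal{W}} J^{\mathrm{LMPC},j+1}_{t+1\rightarrow t+1+N}(x_{t+1}^{j+1}) \le J^{\mathrm{LMPC},j+1}_{t\rightarrow t+N}(x_t^{j+1}) - h\big(x_t^{j+1},\pi^{j+1}(x_t^{j+1})\big).
\end{equation*}
Summing this worst-case decrease along the closed loop (in the worst-case sense) and using the convergence $x_t^{j+1}\rightarrow\mathcal{O}$ from Theorem~\ref{th:recFeas} to discard the nonnegative tail value telescopes to $J^{j+1}_{0\rightarrow T^{j+1}}(x_0^{j+1})\le J^{\mathrm{LMPC},j+1}_{0\rightarrow N}(x_0^{j+1})$.

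For the second inequality I would evaluate~\eqref{eq:FTOCP} at iteration $j+1$ on the suboptimal candidate obtained by running the iteration-$j$ feedback policy $\pi^j(\cdot)$ over the horizon from $x_0^{j+1}=x_0^j$. For every disturbance sequence the predicted states coincide with iteration-$j$ closed-loop states, so the predicted terminal state lies in $\mathcal{R}_N(x_0^j)\subseteq\mathcal{SS}^j\subseteq\mathcal{CS}^j$ and all state/input constraints hold; hence the candidate is feasible. Using $Q^j\le L^j_{\pi^j}$ pointwise and then unrolling the Bellman recursion defining $L^j_{\pi^j}(\cdot)$ for $N$ steps gives
\begin{equation*}
J^{\mathrm{LMPC},j+1}_{0\rightarrow N}(x_0^{j+1}) \le \max_{\bar{\mathbf{w}}}\Big[\textstyle\sum_{k=0}^{N-1} h(x_k,\pi^j(x_k)) + L^j_{\pi^j}(x_N)\Big] = L^j_{\pi^j}(x_0^j) = J^{j}_{0\rightarrow T^{j}}(x_0^j).
\end{equation*}
Chaining this with the first inequality and $x_0^{j+1}=x_0^j$ proves $J^{j+1}_{0\rightarrow T^{j+1}}(x_0^{j+1})\le J^{j}_{0\rightarrow T^{j}}(x_0^j)$.

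I expect the main obstacle to be the careful handling of the $\min$--$\max$ structure: the value-function decrease and the subsequent telescoping are worst-case statements, so I must verify that the disturbance realization against which the candidate policy is evaluated is the same one driving the closed loop, and that maximizing each summand separately does not over-count (the unrolled Bellman recursion must distribute the maximization across the stage costs and the terminal term consistently). A secondary point needing care is arguing that the telescoped quantity equals the worst-case iteration cost~\eqref{eq:BellmanRecursion} rather than merely bounding a single realization, which relies on the convergence guarantee of Theorem~\ref{th:recFeas} to ensure the tail value vanishes as the state enters $\mathcal{O}$.
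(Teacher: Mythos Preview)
The paper itself omits this proof, referring the reader to the extended arXiv version, so there is no in-paper argument to compare against directly. That said, your two-inequality chain---first bounding the worst-case iteration cost at $j+1$ by $J^{\mathrm{LMPC},j+1}_{0\rightarrow N}(x_0^{j+1})$ via the robust value-function decrease (shifted candidate plus terminal step supplied by Propositions~\ref{prop:SSinv}--\ref{prop:lyapFun}) and telescoping, then bounding the LMPC value by $L^j_{\pi^j}(x_0^j)=J^j_{0\rightarrow T^j}(x_0^j)$ using $\pi^j(\cdot)$ as a feasible candidate and $Q^j\le L^j_{\pi^j}$---is precisely the standard LMPC performance-improvement argument the authors established in their deterministic work and is the natural robust extension one expects in the extended version. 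Your flagged subtleties (consistent handling of the $\min$--$\max$ when telescoping, and using Theorem~\ref{th:recFeas} to make the tail LMPC value vanish on~$\mathcal{O}$) are exactly the points that require care, and nothing in the proposal appears to be a genuine gap.
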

\begin{proof}
By Theorem~\ref{th:recFeas}, the LMPC~\eqref{eq:FTOCP} and \eqref{eq:MPCpolicy} is feasible at time $t$ of the $j$th iteration. Let \eqref{eq:optPolicy} be the optimal policy time $t$ of the $j$th iteration, by Proposition~\ref{prop:lyapFun} we have 
\begin{equation*}
\begin{aligned}
    &J^{\scalebox{0.5}{LMPC},j}_{t \rightarrow t+N }(x_t^j)  =\sum_{k=t}^{t+N-1} h(x_{k|t}^{j,*}, \pi_{k|t}^{j,*}(x_{k|t}^{j,*})) + Q^{j-1}(x_{t+N|t}^{j,*}) \\
    &\quad \quad  \geq h(x_{t|t}^{j,*}, u_{t|t}^{j,*}) + \sum_{k=t+1}^{t+N-1} h(x_{k|t}^{j,*}, \pi_{k|t}^{j,*}(x_{k|t}^{j,*})))\\
    &\quad \quad + \min_{u\in\mathcal{U}}\max_{w \in \mathcal{W}} Q^{j-1}(Ax_{t+N|t}^{j,*}+Bu+w) + h(x_{t+N|t}^{j,*},u) \\
    &\quad \quad \geq  h(x_{t|t}^{j,*}, u_{t|t}^{j,*}) + \min_{{\boldsymbol{\pi}_t^j(\cdot)}}   \max_{{\bf{w}}_t^j} ~ [\sum_{k=t}^{t+N-1} h(x_{k|t}, u_{k|t}) \\
    &\quad\quad\quad\quad\quad\quad\quad\quad\quad\quad\quad\quad\quad\quad\quad\quad+ Q^{j-1}(x_{t+N|t})]\\
    &\quad \quad =  h(x_{t|t}^{j,*}, u_{t|t}^{j,*}) +J^{\scalebox{0.5}{LMPC},j}_{t+1 \rightarrow t+1+N }(x_{t+1}^j).
\end{aligned}
\end{equation*}
The above equation and the convergence of the closed-loop system~\eqref{eq:sys} and \eqref{eq:MPCpolicy} from Theorem~1 imply that
\begin{equation*}
\begin{aligned}
    J^{\scalebox{0.5}{LMPC},j}_{0 \rightarrow N }(x_0^j) &\geq h(x_{0|0}^{j,*}, x_{0|0}^{j,*}) +J^{\scalebox{0.5}{LMPC},j}_{1 \rightarrow 1+N }(x_{t+1}^j)\\
    & \geq \sum_{t=0}^\infty h(x_{t|t}^{j,*}, u_{t|t}^{j,*}) + \lim_{t \rightarrow \infty} J^{\scalebox{0.5}{LMPC},j}_{t \rightarrow t+N }(x_{t}^j)\\
    &=\sum_{t=0}^\infty h(x_{t}^{j}, u_{t}^{j}).
\end{aligned}
\end{equation*}
The above derivation holds for all disturbance realization, therefore we have that 
\begin{equation*}
    J^{\scalebox{0.5}{LMPC},j}_{0 \rightarrow N }(x_0^{j}) \geq J_{\pi^{j}}^{j}(x_0^{j}).
\end{equation*}

Finally we notice that the above inequality together with Equations~\eqref{eq:worstCostToGo}-\eqref{eq:Qfun} and the feasibility of the LMPC policy $\pi^j(\cdot)$ \eqref{eq:MPCpolicy} at the next iteration $j+1$ imply that
\begin{equation*}
\begin{aligned}
J_{\pi^{j}}^{j}(x_0^{j}) & = L^j_{\pi^j}(x_0^{j}) \\
&= \max_{w_0^j, \ldots, w_{N-1}^j} \sum_{k=0}^{N-1} [h(x_k^j,\pi^j(x_k^j)) + L^j_{\pi^j}(x_{N}^j)] \\
     & \geq \max_{w_0^j, \ldots, w_{N-1}^j} \sum_{k=0}^{N-1} [h(x_k^j,\pi^j(x_k^j)) + Q^j(x_{N}^j)] \\
    & \geq J^{\scalebox{0.5}{LMPC},j+1}_{0 \rightarrow N }(x_0^{j}) \geq J_{\pi^{j+1}}^{j+1}(x_0^{j})=J_{\pi^{j+1}}^{j+1}(x_0^{j+1}).
\end{aligned}
\end{equation*}
\end{proof}

Finally, we show that the domain of the LMPC~\eqref{eq:FTOCP} and \eqref{eq:MPCpolicy} does not shrink at each iteration.
\begin{theorem}\label{th:policyDomanin}
Consider system~\eqref{eq:sys} in closed-loop with the LMPC~\eqref{eq:FTOCP} and \eqref{eq:MPCpolicy}. Let Assumptions~\ref{ass:O_inf}-\ref{ass:cost} hold, and initialize $\mathcal{CS}^0=\mathcal{O}$ and $Q^0(\cdot)=0$. If $x_0^j\in\mathcal{F}^j, \forall j\geq1$. Then, the domain of which the LMPC defined in \eqref{eq:F} does not shrink at each iteration, i.e. $\mathcal{F}^i \subseteq \mathcal{F}^j, \forall j \geq i$.
\end{theorem}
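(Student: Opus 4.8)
The plan is to reduce the claim to a monotonicity property of the convex safe set and then to observe that $\mathcal{F}^j$ is essentially an $N$-step robust backward reachable set whose only iteration-dependent ingredient is the terminal target $\mathcal{CS}^{j-1}$. The argument proceeds by induction on the iteration index, so it suffices to establish the one-step inclusion $\mathcal{F}^j \subseteq \mathcal{F}^{j+1}$; the general statement $\mathcal{F}^i \subseteq \mathcal{F}^j$ for $j \geq i$ then follows by transitivity.

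First I would prove that the convex safe set is non-decreasing, i.e.\ $\mathcal{CS}^{j-1} \subseteq \mathcal{CS}^j$ for every $j \geq 1$. Reading definition~\eqref{eq:CS} recursively and using the identity $\text{conv}(A \cup B) = \text{conv}\big(\text{conv}(A) \cup B\big)$, one has $\mathcal{CS}^j = \text{conv}\big(\mathcal{CS}^{j-1} \cup \mathcal{SS}^j\big)$. Since the convex hull of a set always contains that set, $\mathcal{CS}^{j-1} \subseteq \text{conv}\big(\mathcal{CS}^{j-1} \cup \mathcal{SS}^j\big) = \mathcal{CS}^j$, which is the desired inclusion. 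For the base case I would use that every safe set contains the goal set by definition~\eqref{eq:safeSet}, namely $\mathcal{O} \subseteq \mathcal{SS}^k$ for every $k$, so that the initialization $\mathcal{CS}^0 = \mathcal{O} \subseteq \mathcal{CS}^1$ is consistent with the recursion.

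Next I would carry the monotonicity of $\mathcal{CS}^j$ through definition~\eqref{eq:F} of the policy domain. Fix $x_0 \in \mathcal{F}^j$. By definition there exists a feedback policy $\kappa(\cdot)$ such that the robustly controlled trajectory stays in $\mathcal{X}$, the inputs stay in $\mathcal{U}$, and the $N$-step state satisfies $x_N \in \mathcal{CS}^{j-1}$, for all disturbance sequences with $w_k \in \mathcal{W}$. The defining conditions for $\mathcal{F}^{j+1}$ are identical except that the terminal constraint is relaxed to $x_N \in \mathcal{CS}^j$. Because $\mathcal{CS}^{j-1} \subseteq \mathcal{CS}^j$, the very same $\kappa(\cdot)$ certifies that $x_0 \in \mathcal{F}^{j+1}$, whence $\mathcal{F}^j \subseteq \mathcal{F}^{j+1}$. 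Iterating this inclusion yields $\mathcal{F}^i \subseteq \mathcal{F}^j$ for all $j \geq i$.

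I do not expect a genuine obstacle here: the entire content is the monotonicity of the terminal set, and the quantifier structure over the policy $\kappa(\cdot)$ and over the disturbances $w_k \in \mathcal{W}$ is unchanged between consecutive iterations, so the same certificate transfers verbatim. The only point requiring a little care is reconciling the base case with the recursive versus explicit form of definition~\eqref{eq:CS}, which is handled by the observation $\mathcal{O} \subseteq \mathcal{SS}^k$ noted above.
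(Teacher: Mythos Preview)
Your proposal is correct and follows exactly the same approach as the paper: the paper simply observes that definition~\eqref{eq:CS} immediately gives $\mathcal{CS}^i \subseteq \mathcal{CS}^j$ for all $j\geq i$, and since the terminal target in~\eqref{eq:F} is the only iteration-dependent ingredient, $\mathcal{F}^i \subseteq \mathcal{F}^j$ follows. You have merely spelled out the convex-hull argument for the safe-set monotonicity and the certificate-transfer step that the paper leaves implicit.
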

\begin{proof}
The proof follows from the definition of the convex safe set. Notice that by definition~\eqref{eq:CS} we have that $\mathcal{CS}^i \subseteq \mathcal{CS}^j, \forall j \geq i$. Therefore, the terminal set in \eqref{eq:F} is not shrinking at each iteration and $\mathcal{F}^i \subseteq \mathcal{F}^j, \forall j \geq i$.
\end{proof}

\section{Practical Implementation}\label{sec:sampleBased}
In this section we show how the closed-loop trajectories associated with unknown sampled disturbance sequences can be used to approximate the convex safe set $\mathcal{CS}^j$ and the value function $Q^j(\cdot)$. At each $j$th iteration we collect $R$ simulations of the closed-loop systems, also referred to as ``roll-outs". Afterwards, we exploit these $R$ roll-outs to approximate the robust reachable sets~\eqref{eq:reachSet} and the worst-case cost-to-go~\eqref{eq:worstCostToGo}.

\subsection{Sample-Based Convex Safe Set}\label{sec:sampleBasedSafeSet}
In this section we show how the data from the closed-loop system~\eqref{eq:sys} and \eqref{eq:policy} can be used to approximate the convex safe set $\mathcal{CS}^j$. We define the $i$th disturbance realization sequence ${\bf{w}}^{j}_i=[w_{0,i}^{j},\ldots,w_{T^j,i}^{j}]$, where $w_{k,i}^{j}$ is the realized disturbance at time $k$ of the $j$th iteration. Furthermore, we denote the stored closed-loop trajectory associated with the $i$th disturbance realization  ${\bf{w}}^{j}_i$ as
\begin{equation}\label{eq:realizedTraj}
\begin{aligned}
    {\bf{x}}^{j}({\bf{w}}^{j}_i) &= [x_0^{j}({\bf{w}}^{j}_i), \ldots, x_{T^j}^{j}({\bf{w}}^{j}_i)], 
\end{aligned}
\end{equation}
where $T^j$ is the time at which the terminal goal set $\mathcal{O}$ is reached. The above notation emphasizes that the realized state $x_k^{j}({\bf{w}}^{j}_i)$ is a function of the realized disturbance sequence ${\bf{w}}^{j}_i$. Now, we notice that at each time $k$ of the $j$th iteration the state $x_k^{j}({\bf{w}}^{j}_i)$ is contained into the $k$-steps robust reachable set from $x_0^j$ (i.e. $x_k^{j}({\bf{w}}^{j}_i) \in \mathcal{R}_k(x_0^j))$. Therefore, we approximate the $k$-steps robust reachable set $\mathcal{R}_k(x_0^j)$ using $R$ roll-outs. In particular, for $i \in \{1, \ldots, R \}$ sampled disturbance sequences ${\bf{w}}^{j}_i$ we define the  approximated $k$-steps robust reachable set
\begin{equation}\label{eq:reachSetApprox}
    \tilde{\mathcal{R}}_k(x_0^j) = \text{conv} \Bigg( \bigcup_{i=1}^R x_k^{j}({\bf{w}}^{j}_i) \Bigg) \subseteq \text{conv} \Big(  \mathcal{R}_k(x_0^j)  \Big).
\end{equation}
Finally, we define the approximated safe set
\begin{equation*}
   \tilde{\mathcal{SS}}^j = \bigg\{ \bigcup_{k=0}^{T^j} \tilde{\mathcal{R}}_k(x_0^j) \bigg\} \bigcup \mathcal{O},
\end{equation*}
which is used to construct the approximated convex safe set,
\begin{equation}\label{eq:CSapprox}
    \tilde{\mathcal{CS}}^j = \text{conv}\Bigg( \bigcup_{k=0}^j \tilde{\mathcal{SS}}^k \Bigg).    
\end{equation}
It is important to underline that the above approximated convex safe set $\tilde{\mathcal{CS}}^j$ is not invariant, as the approximated reachable sets are an inner approximation of the exact reachable sets (Figure~\ref{fig:safeSetComparisonData}). Indeed, it may exist a disturbance realization which can steer the closed-loop system \eqref{eq:sys} and \eqref{eq:MPCpolicy} outside $\tilde{\mathcal{CS}}^j$. In particular, given $x \in \tilde{\mathcal{CS}}^j$ there is a probability $\epsilon >0$ that the closed-loop system evolves outside $\tilde{\mathcal{CS}}^j$,
\begin{equation}\label{eq:probNotInvariant}
    \text{Pr} (Ax+B\pi^j(x)+w \notin\tilde{\mathcal{CS}}^j | x \in \tilde{\mathcal{CS}}^j) \geq \epsilon.
\end{equation}
In the result section, we show that the above probability is a function of the number of roll-outs used to construct $\tilde{\mathcal{CS}}^j$. In particular as more roll-outs are collected, $\tilde{\mathcal{CS}}^j$ from \eqref{eq:CSapprox} better approximates the convex safe set ${\mathcal{CS}}^j$ from \eqref{eq:CS}.

\begin{figure}[h!]
\centering
\includegraphics[width= \columnwidth]{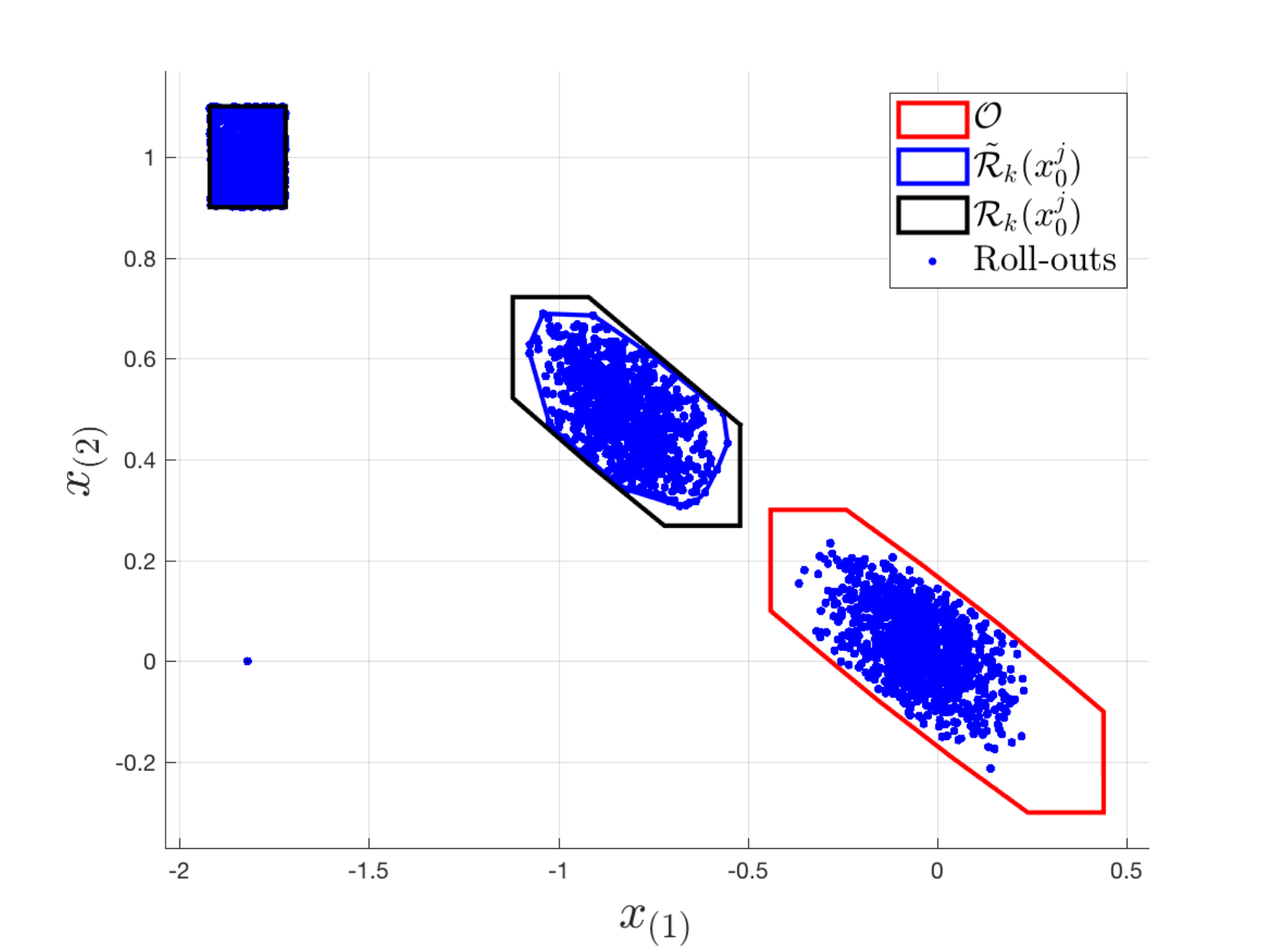}
\caption{Approximated robust reachable sets $\tilde{\mathcal{R}}_k $ from \eqref{eq:reachSetApprox} construct using $1000$ roll-outs. We notice that the approximated robust reachable sets $\tilde{\mathcal{R}}_k $ are an inner approximation the robust reachable sets ${\mathcal{R}}_k $ from~\eqref{eq:reachSet}.} \label{fig:safeSetComparisonData}
\end{figure}

\subsection{Sample-Based Q-function}\label{sec:sampleQfun}
In this section we show how the closed-loop trajectories may be used to approximate the cost-to-go function $L^j_{\pi^j}(\cdot)$ in \eqref{eq:worstCostToGo}. First, we define the realized cost-to-go associated with the stored state $x_k^{j}({\bf{w}}^{i}) \in \tilde{\mathcal{R}}_k(x_0^j) \subseteq \tilde{\mathcal{SS}}^j$,
\begin{equation}\label{eq:realizedCost}
\tilde J^j_{k \rightarrow T^j}(x_k^{j}({\bf{w}}^{i})) = \sum_{t=k}^{T^j} h\Big(x_k^{j}({\bf{w}}^{i}), u_k^{j}({\bf{w}}^{i})\Big)
\end{equation}
where $u_k^{j}({\bf{w}}^{i}) = \pi^j(x_k^{j}({\bf{w}}^{i}))$.

The realized cost \eqref{eq:realizedCost}, associated with the realized trajectory \eqref{eq:realizedTraj}, is used to approximate the worst-case cost-to-go function $L^j_{\pi^j}(\cdot)$. We compute an hyperplane which upper-bounds the realized cost $\tilde J^j_{k \rightarrow T^j}(x_k^{j}({\bf{w}}^{i}))$ for all stored states $\big\{ \bigcup_{i=1}^R x_k^{j}({\bf{w}}^{i}) \big\} \in \tilde{\mathcal{R}}_k(x_0^j)$. In particular, for time $k$ of the $j$th iteration we define the hyperplane $a^j_kx + b^j_k$, where
\begin{equation}
\begin{aligned}
    [a^j_k,b^j_k&] = \\
    =&\underset{a \in \mathbb{R}^n,b \in \mathbb{R}}{\operatorname{argmin}}  \quad \sum_{i=0}^R ||a x_k^{j}({\bf{w}}^{i}) +b -  \tilde J^j_{k \rightarrow T^j}(x_k^{j}({\bf{w}}^{i}))||_2^2\\
    &\quad \quad ~\text{s.t. }  \quad a x_k^{j}({\bf{w}}^{i}) +b \geq \tilde J^j_{k \rightarrow T^j}(x_k^{j}({\bf{w}}^{i})), \\
    &\quad \quad ~\phantom{s.t. } \quad\forall i \in \{0,\ldots,R\}.
\end{aligned}
\end{equation}
At the $j$th iteration, the hyperplanes $a^j_kx + b^j_k$ are used to approximate the worst-case cost-to-go $L^j_{\pi^j}(\cdot)$ from \eqref{eq:worstCostToGo} as follows,
\begin{equation}\label{eq:approxCostToGo}
    \tilde{L}_{\pi^j}^j(x) = \begin{cases}  +\infty & \mbox{If } x\notin \tilde{\mathcal{SS}}^j \\
    0 & \mbox{Elseif } x \in \mathcal{O} \\
    a^j_k x + b^j_k & \mbox{Elseif } x \in   \tilde{\mathcal{R}}_k(x_0^j)
    \end{cases}.
\end{equation}
The resulting approximated value function is defined as 
\begin{equation}\label{eq:QfunApprox}
    \tilde{Q}^j(x) = \min_\mu \mu ~ | ~ (x, \mu) \in  \text{conv}\big(\bigcup_{k=0}^j \text{epi}(\tilde{L}_{\pi^j}(x)^j ) \big).
\end{equation}
Finally, we underline that the above approximated value function is not a control Lyapunov function for system \eqref{eq:sys}. Indeed, there is a probability $\gamma >0$ that Equation~\eqref{eq:proofContrLyap} does not hold and $\tilde{Q}^j(\cdot)$ is not decreasing along the closed-loop trajectory,
\begin{equation}\label{eq:probNotLyap}
    \text{Pr}\big(\tilde{Q}^j(Ax+B\pi^j(x)+w)+h(x,\pi^j(x))-\tilde{Q}^j(x)>0\big)\geq\gamma.
\end{equation}
In the result section, we show that above probability is inversely proportional to the number $R$ of roll-outs used to construct $\tilde{L}_{\pi^j}^j(\cdot)$ in~\eqref{eq:approxCostToGo}.

\section{Results}\label{sec:res}
We test the proposed control strategy on the following double integrator system
\begin{equation}\label{eq:doubleIntSys}
    x_{k+1} = \begin{bmatrix} 1 & 1 \\ 0 & 1 \end{bmatrix} x_k + \begin{bmatrix} 0 \\ 1 \end{bmatrix} u_k + w_k,
\end{equation}
where the the random disturbance $w_k$ is uniformly distributed on the set $\mathcal{W} = \{w \in \mathbb{R}^2: ||w_k||_\infty \leq 0.1\}$. The system is subjected to the following state and input constraints, $x_k \in \mathcal{X} = \{x\in \mathbb{R}^2: ||x||_\infty \leq 10 \}$ and $u_k \in \mathcal{U} = \{u\in \mathbb{R}^2: ||u||_\infty \leq 1 \}$, for all $k\geq0$. Furthermore, we compute the minimal robust positive invariant set $\mathcal{O}$ for the autonomous system $x_{k+1} = (A+BK)x_k + w_k$ where $-K$ is the LQR gain for $Q=1$ and $R=1$. Finally, we define the stage cost $h(x,u) = |x|_\mathcal{O} + |u|_{KO}$
which satisfies Assumption~\ref{ass:cost}.

The convex safe set $\mathcal{CS}^j$ and value function $Q^j(\cdot)$, used in the LMPC~\eqref{eq:FTOCP} and \eqref{eq:MPCpolicy}, are approximated as described in Section~\ref{sec:sampleBased}. In particular at each iteration $j$, we use $R$ roll-outs to compute the approximated safe set $\tilde{\mathcal{CS}}^j$ and value function $\tilde{Q}^j(\cdot)$. In order to initialize the LMPC we set $N=3$, $\tilde{\mathcal{CS}}^0 = \mathcal{O}$ and $\tilde{Q}^0(\cdot) = 0$. Finally at each $j$th iteration, the initial state $x_0^j$ is computed as the furthest point along the negative $x$-axis which belongs to $\mathcal{F}^j$. Basically, we set $a = [-1,~ 0]$ in \eqref{eq:initCondition}.

\subsection{Convex Safe Set and Value Function Approximation}

In this section, we construct $\tilde{\mathcal{CS}}^1$ and $\tilde{Q}^1(\cdot)$ using $R=100$ and $R=1000$ roll-outs. Furthermore, we perform $1000$ Monte-Carlo simulations for the closed-loop system \eqref{eq:sys} and \eqref{eq:MPCpolicy}, in order to estimate the properties of $\tilde{\mathcal{CS}}^1$ and $\tilde{Q}^1(\cdot)$.

\begin{figure}[h!]
\centering
\includegraphics[width= \columnwidth]{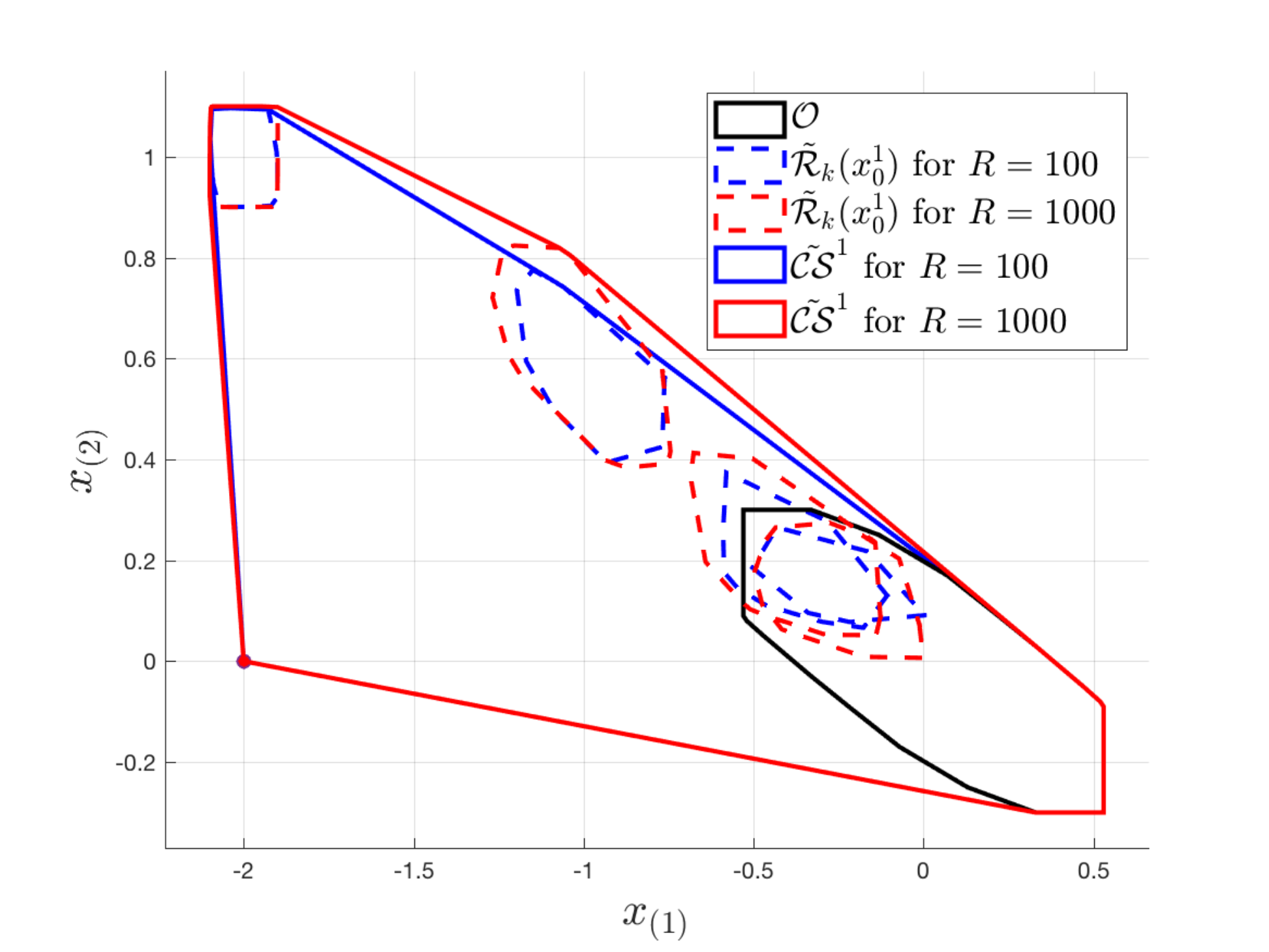}
\caption{The approximated robust reachable sets $\tilde{\mathcal{R}}_k$ \eqref{eq:reachSetApprox} used to construct $\tilde{\mathcal{CS}}^1$ with $R=100$ and $R=1000$ roll-outs. Notice that the approximated convex safe set $\tilde{\mathcal{CS}}^1$ constructed using $1000$ roll-outs contains the one constructed using 100.} \label{fig:safeSetComparison}
\end{figure}

Figure~\ref{fig:safeSetComparison} shows the terminal set $\mathcal{O}$ and the approximated robust reachable sets $\tilde{\mathcal{R}}_k(x_0^1)$, which are used to construct the approximated convex safe set $\tilde{\mathcal{CS}}^j$ with $R=100$ and $R=1000$ roll-outs. As expected, the approximated convex safe set $\tilde{\mathcal{CS}}^j$ constructed using $1000$ trajectories contains the one constructed using $100$ trajectories. As mentioned in Section~\ref{sec:sampleBasedSafeSet} (Eq.~\eqref{eq:probNotInvariant}), the approximated convex safe set is not invariant. Indeed, there is a probability $\epsilon>0$ that, given a state $x \in \tilde{\mathcal{CS}}^1$, the closed-loop system evolves outside $\tilde{\mathcal{CS}}^1$. In order to estimate the probability $\epsilon$, we perform $1000$ Monte-Carlo simulations for the closed-loop system \eqref{eq:sys} and \eqref{eq:MPCpolicy} and we compute the percentage of realized states which evolved outside $\tilde{\mathcal{CS}}^j$. As expected the probability $\epsilon$ decreases as more roll-outs are used to construct $\tilde{\mathcal{CS}}^1$. In particular, we have that $\epsilon \sim 3.6\%$ and $\epsilon \sim 0.3\%$ for $R=100$ and $R=1000$, respectively.

\begin{figure}[h!]
\centering
\includegraphics[trim = 10mm 2mm 10mm 0mm, clip, width=1.0\columnwidth]{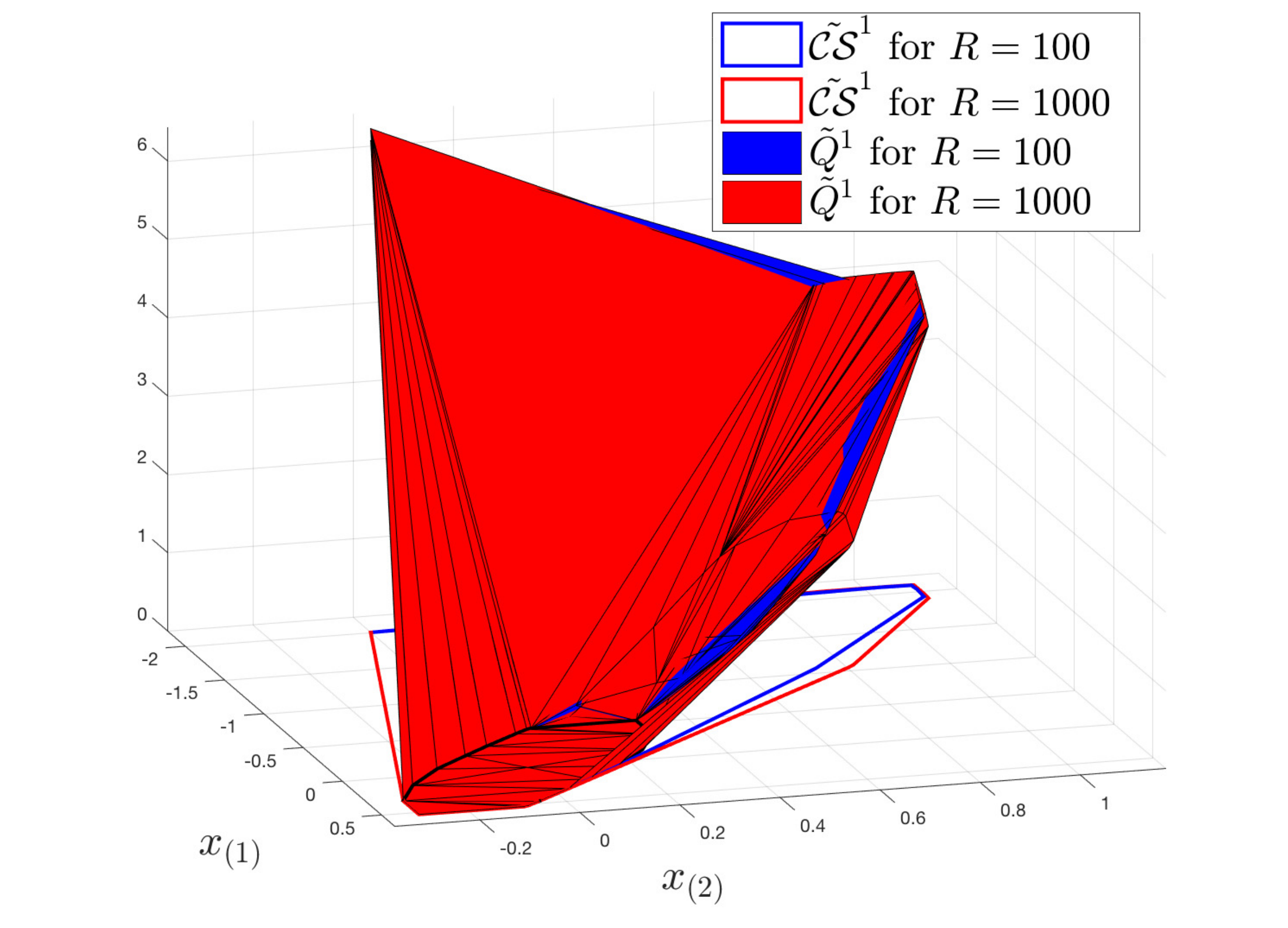}
\caption{Approximated value function $\tilde{Q}^j(\cdot)$ constructed with $R=100$ and $R=1000$ realized trajectories. Note that as more trajectories are used the value of $\tilde{Q}^j(\cdot)$ increases almost everywhere, thus it better approximated ${Q}^j(\cdot)$.}\label{fig:valueFunctionComparison}
\end{figure}

Finally, we analyze how the number of roll-outs affects the approximated value function $\tilde{Q}^1(\cdot)$. Figure~\ref{fig:valueFunctionComparison} shows the approximated value function $\tilde{Q}^1(\cdot)$ constructed with $R=100$ and $R=1000$ roll-outs. First, we notice that the domain of  approximated value function $\tilde{Q}^1(\cdot)$ is enlarged as more realized trajectories are used to compute the approximation. Indeed, the domain of $\tilde{Q}^1(\cdot)$ is the approximated safe set $\tilde{\mathcal{CS}}^1$ from Figure~\ref{fig:safeSetComparison}. Second, we recall that $\tilde{Q}^1(\cdot)$ is constructed based on sampled disturbance sequences and it underestimates $Q^1(\cdot)$, which considers the whole disturbance support. Therefore, we expect that as more sample disturbance sequences are considered $\tilde{Q}^1(\cdot)$ better approximates ${Q}^1(\cdot)$. This intuition is confirmed by Figure~\ref{fig:valueFunctionComparison}, we notice that $\tilde{Q}^1(\cdot)$ constructed with $1000$ trajectories  upper-bounds almost everywhere the value function $\tilde{Q}^1(\cdot)$ constructed with $100$ trajectories, therefore it better approximates $Q^1(\cdot)$. Finally, we recall from Equation~\eqref{eq:probNotLyap} that $\tilde{Q}^1(\cdot)$ is not a robust control Lyapunov function. Indeed, there is a probability $\gamma>0$ that $\tilde{Q}^1(\cdot)$ is not decreasing along the realized closed-loop trajectory. In order to estimate the probability $\gamma$, we use $1000$ Monte Carlo simulations. As expected, the probability $\gamma$ decreases as more closed-loop trajectories are used to construct $\tilde{{Q}}^1(\cdot)$. In particular, we have $\gamma \sim 10.1\%$ and $\gamma \sim 4.3\%$ for $R=100$ and $R=1000$, respectively.

\subsection{Iterative Policy Update}\label{sec:iterativePolicyUpdate}
In this section we run the LMPC for $10$ iterations. In particular, at each $j$th iteration we collect $R=1000$ roll-outs which are used to compute the approximated convex safe set $\tilde{\mathcal{CS}}^j$ and the approximated value function $\tilde{Q}^j(\cdot)$. We show that the LMPC is able to explore the state space while safely steering the system to the terminal set $\mathcal{O}$.

\begin{table}[h!]
\caption{Initial condition $x_0^j$ at each $j\text{th}$ iteration.}
\label{table:IC}
\centering
\begin{tabular}{l|l} \toprule
  \rule{0pt}{12pt} $x^1_0 = -\begin{bmatrix} 2.00 & 0\end{bmatrix}^\top$        & $x^6_0 = -\begin{bmatrix} 9.90 & 0\end{bmatrix}^\top$\\
  \rule{0pt}{12pt} $x^2_0 = -\begin{bmatrix} 5.46 & 0\end{bmatrix}^\top$        & $x^7_0 = -\begin{bmatrix} 9.90 & 0\end{bmatrix}^\top$\\
  \rule{0pt}{12pt} $x^3_0 = -\begin{bmatrix} 6.86 & 0\end{bmatrix}^\top$        & $x^8_0 = -\begin{bmatrix} 9.90 & 0\end{bmatrix}^\top$\\
  \rule{0pt}{12pt} $x^4_0 = -\begin{bmatrix} 9.35 & 0\end{bmatrix}^\top$        & $x^9_0 = -\begin{bmatrix} 9.90 & 0\end{bmatrix}^\top$ \\
  \rule{0pt}{12pt} $x^5_0 = -\begin{bmatrix} 9.90 & 0\end{bmatrix}^\top$        & $x^{10}_0 = -\begin{bmatrix} 9.90 & 0\end{bmatrix}^\top$ \\ \midrule
 \end{tabular}
\end{table}

\begin{figure}[h!]
\centering \includegraphics[trim = 5mm 2mm 5mm 8mm, clip, width=1.0\columnwidth]{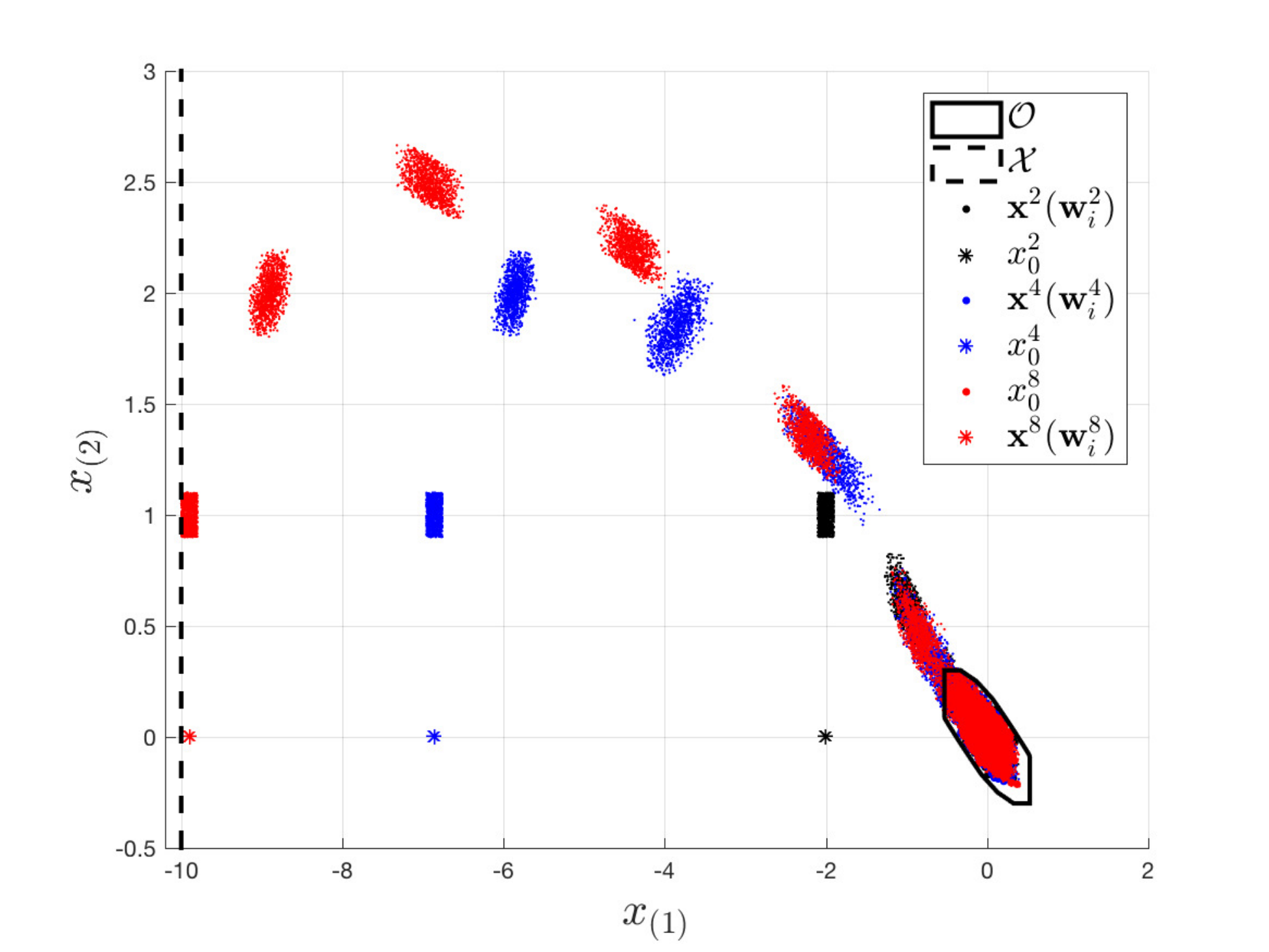}
\caption{For iterations $j\in \{2,4,8\}$ and $i=\{1 ,\ldots,1000\}$ disturbance realizations we show the closed-loop trajectories ${\bf{x}}^j({\bf{w}}^j_i)$ from \eqref{eq:realizedTraj}. Furthermore, we report the initial condition $x_0^j$ which is further from the origin at each iteration.}\label{fig:closedLoopTrajectories}
\end{figure} 

As stated in Section~\ref{sec:res}, at each $j$th iteration we compute the initial condition $x_0^j$ as the furthest point along the negative $x$-axis such that Problem~\eqref{eq:FTOCP} is feasible. Notice that by Theorem~\ref{th:policyDomanin}, the domain of the LMPC policy $\mathcal{F}^j$ is enlarged at each iteration (i.e. ${\mathcal{F}}^k \subseteq {\mathcal{F}}^j$ for all $k \in \{1, \ldots, j\}$). As a result, the region of the state space from which the controller is able to safely complete the control task grows at each iteration. This fact is highlighted in Table~\ref{table:IC}, where we report the initial condition $x_0^j$ as a function of the iteration index. Furthermore, in Figure~\ref{fig:closedLoopTrajectories} we show $1000$ realized trajectories for the $2$nd, $4$th and $8$th iterations. We notice that at each iteration the LMPC safely operates the system over progressively larger regions of the state space, until the closed-loop trajectory is close to saturate the state constraints.

\begin{figure}[b!]
\centering \includegraphics[trim = 5mm 2mm 5mm 8mm, clip, width=1.0\columnwidth]{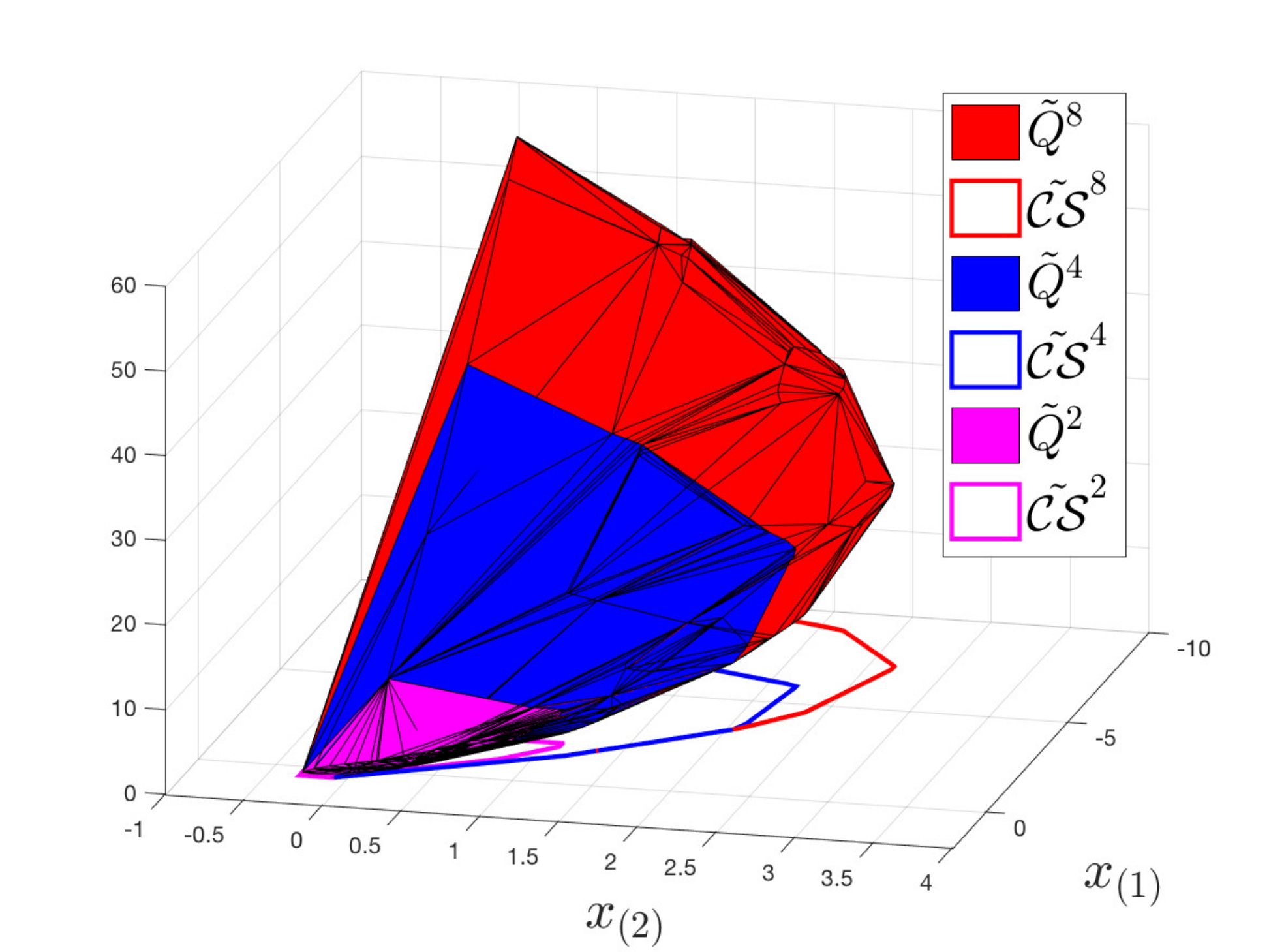}
\caption{Approximated value function $\tilde{Q}^j$ at the $2$nd, $4$th and $8$th iteration. Notice that the domain of $\tilde{Q}^j$ is enlarged at each iteration.}\label{fig:valueFunctionEvolution}
\end{figure}

Finally, in Figure~\ref{fig:valueFunctionEvolution} we report the approximated value function $\tilde{Q}^j(\cdot)$ for the $2$nd, $4$th and $8$th iterations. We recall that the domain of $\tilde{Q}^j(\cdot)$ is the approximated convex safe set $\tilde{\mathcal{CS}}^j$, which is enlarged at each iteration. Therefore, as more iterations of the control task are executed, $\tilde{Q}^j(\cdot)$ approximates the value function over larger regions of the state space, as shown in Figure~\ref{fig:valueFunctionEvolution}.

\subsection{Performance Improvement}

In this section we empirically validate Theorem~\ref{th:cost}. We design a LMPC which minimizes the stage cost $\bar h(x,u) = 0.1|x|_\mathcal{O} + |u|_{KO}$. Afterwards, we run the closed-loop system for $10$ iterations starting from the same initial condition, $x_0^j = -[0,~9.9]~ \forall j \in \{0,\dots,9\}$. In order to initialize the LMPC, we use a suboptimal controller which robustly steers system \eqref{eq:doubleIntSys} to $\mathcal{O}$ and we exploit the closed-loop data to initialize the approximated convex safe set and value function. 

\begin{figure}[t!]
\centering \includegraphics[trim = 5mm 2mm 5mm 8mm, clip, width=1.0\columnwidth]{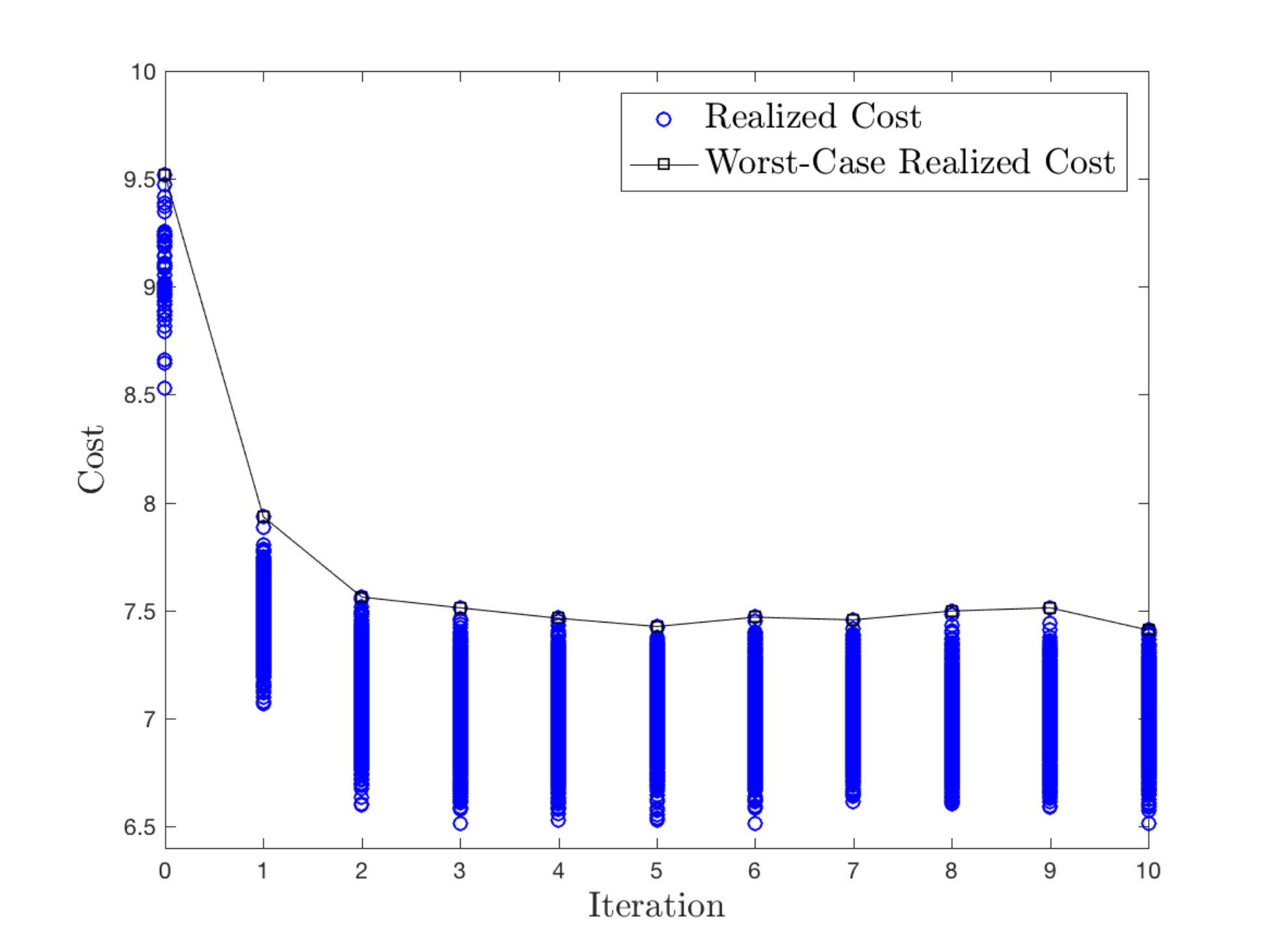}
\caption{Worst-case realized cost and realized cost of the LMPC over the iteration index. We notice that the LMPC improves the worst-case realized cost from the suboptimal controller at the $0$th iteration, until it reaches convergence.}\label{fig:worstCaseCostBound}
\end{figure}

Figure~\ref{fig:worstCaseCostBound} shows the closed-loop cost $\tilde J^j_{0 \rightarrow T^j}(x_0^j({\bf{w}}^j_i))$ from~\eqref{eq:realizedCost} and the worst-case realized cost
\begin{equation}\label{eq:worstCaseRealizedCost}
\max_{i \in \{0,\ldots,R\}} \tilde J^j_{1 \rightarrow T^j}(x_0^j({\bf{w}}^j_i))
\end{equation}
for $10$ iterations. We notice that the LMPC is able to improve the worst-case realized cost associated with the suboptimal policy used at the $0$th iteration. Furthermore, we underline that the controller performs exactly the same task at each iteration ($x_0^j = x_0^i, \forall j,i\geq0$) and the worst-case realized cost \eqref{eq:worstCaseRealizedCost} decreases at each iteration, until it converges within a tolerance of $0.7\%$ as stated in Theorem~\ref{th:cost}.

\section{Conclusions}
In this paper we proposed a sample-based Learning Model Predictive Controller (LMPC) for linear system subject to bounded additive uncertainty. First, we used the LMPC policy to construct a safe set and the associated value function. 
Afterwards, we showed that the proposed strategy allows to guarantee safety and worst-case performance improvement. 
Finally, we exploited sampled closed-loop trajectories to approximate the safe set and associated value function. 
We demonstrated the effectiveness of the proposed approach on a numerical example. In particular, we showed that the proposed LMPC is able to safely explore the state space while estimating the value function associated with the control task. Future work concentrates on finding probability bounds, which would allows to characterize the properties of the approximated safe set and approximate value function as a function of the sampled trajectories.


\bibliographystyle{IEEEtran}
\bibliography{IEEEabrv,mybibfile}

\end{document}